\documentclass{ifacconf}

\usepackage{amsmath,amsfonts,amssymb}
\usepackage{graphicx}      
\usepackage{natbib}        

\usepackage{float}
\usepackage[usenames]{color}
\usepackage{comment}

\usepackage{mathtools}
\usepackage{multirow}

\usepackage{enumitem}

\usepackage{balance}

\usepackage{algorithm}

\usepackage{colortbl}

\newcommand\calc{\ensuremath{\mathcal C}}

\newcommand\kset{\ensuremath{\mathbb K}}

\newcommand{\R}{\mathbb{R}}

\newcommand{\N}{\mathbb{N}}

\newcommand{\argmin}{\operatorname{argmin}}

\newcommand{\0}{{0}}

\newcommand{\1}{{\bf 1}}

\newcommand{\fkz}{{\mathfrak{z}}}

\newcommand{\<}{\leqslant}
\renewcommand{\>}{\geqslant}

\newtheorem{lemma}{Lemma}
\newtheorem{theorem}{Theorem}

\newtheorem{remark}{Remark}
\newtheorem{corollary}{Corollary}
\newtheorem{definition}{Definition}

\newtheorem{assumption}{Assumption}

\begin{document}
\begin{frontmatter}

\title{Robust predictive control design for uncertain discrete switched affine systems subject to an input delay \thanksref{footnoteinfo}} 

\thanks[footnoteinfo]{The work of all authors was supported by grants PID2023-150118OB-I00 and ATR2023-145067 funded by MICIU/ AEI /10.13039/501100011033/.}

\author[First]{Gerson Portilla} 
\author[First]{Carolina Albea}
\author[First]{Alexandre Seuret} 

\address[First]{Departamento de Ingeniería de Sistemas y Automática de la Universidad de Sevilla (e-mail: gportilla,albea,aseuret@us.es)}

\begin{abstract}               
Robust stabilization conditions for uncertain switched affine systems subject to a unitary input delay are presented. They are obtained through the Lyapunov framework and a min-switching state-feedback predictive control law. The result relies on a prediction scheme considering nominal system parameters. By constructing a Lyapunov function that considers the prediction error, we demonstrate the exponential convergence of the system trajectories and system prediction to a robust limit cycle. An example is provided to validate the obtained result.
\end{abstract}

\begin{keyword}
Switched affine system, Limit cycles, LMI, Input delay, Robust predictive control.
\end{keyword}

\end{frontmatter}

\section{Introduction}
Switched affine systems represent a specific category of hybrid systems characterized by a collection of affine subsystems and a switching signal that determines the active subsystem at any given time. In continuous time, the state evolves according to an affine differential equation, whereas in discrete time, it evolves according to an affine difference equation \citep{sun2011stability}. Moreover, its equilibrium is characterized by the convergence of a set of possibly several equilibrium points owing to the affine term \citep{deaecto2016stability}. It is worth mentioning that this class of systems is a crucial aspect of control theory, particularly in the context of power electronics \citep{theunisse2015robust,albea2015hybrid}.
\\
For the stabilization of discrete switched affine systems, the Lyapunov framework combined with periodic switching sequences has enabled the design of state-dependent control laws based on the min-switching strategy \citep{deaecto2016stability}. This methodology has allowed noticing the inherent convergence to an invariant limit cycle, rather than a fixed point, which is defined as a closed trajectory that includes equilibrium points and signifies a sustained oscillatory behavior \citep{rubensson1998convergence,rubensson2000stability}. The notion of limit cycle was further explored in \citet{serieye2023attractors}, where a precise definition and the necessary and sufficient conditions for its existence in switched affine systems were established.
\\
Switched affine systems have been studied in more complex scenarios, including uncertainties in system parameters, effects of additive disturbances, and controller input delays. In practice, the issue of input delay was exhibited in \citet{merchan2024data}, where the inherent delay between the power converter and the stabilizing switching signal required additional challenges for its stabilization.
\\
Robust control techniques for analysis and design \citep{scherer2001theory,ebihara2015s} have been effectively implemented, considering the system uncertainties and perturbations. For the continuous time domain, robust stabilization conditions expressed in terms of Linear Matrix Inequalities (LMI) along with min-switching control laws were developed in \citet{beneux2017robust} and \citet{beneux2019adaptive} for an unknown equilibrium point. For the discrete time domain, the analysis of exogenous inputs and uncertain parameters was addressed by \citet{egidio2020global} and \citet{serieye2023attractors}, respectively. In both cases, the convergence of the system solutions was ensured to a robust limit cycle, more precisely, a set of attractors around a limit cycle \citep{serieye2023attractors}.
\\
Regarding delayed controllers, delayed-state-feedback ones have been studied in \citet{mahmoud2010}, \citet{vu2010stability}, and \citet{mazenc2017state}, to mention a few. However, the stabilization of switched affine systems subject to a delayed control switching law reports few contributions in the literature \citep{sanchez2019switching,portilla2024}. In \citet{sanchez2019switching}, a class of MIMO bilinear systems with constant delays in both the state and the input switching control law, which is transformed into a switched affine one, was addressed via the Lyapunov-Krasovskii framework, resulting in a min-switching control law strategy that allowed ensuring the convergence to a given reference in the continuous time domain. In \citet{portilla2024}, assuming that the system parameters are perfectly known, the authors proposed a system-state-based predictive scheme to compensate for the input delay and designed a predictive switching control law that guarantees the global exponential stability of the closed-loop system to a limit cycle in the discrete time domain. 
\\
To the best of our knowledge, the stabilization of uncertain switched affine systems subject to an input delay in the control switching law has not been investigated thoroughly. This was initially discussed by \citet{portilla2024} only in simulations, observing that a predictive control law might induce the convergence to a neighborhood of the limit cycle. Unfortunately, for the case of uncertain system parameters, the direct extension of the developments presented in \cite{portilla2024} is not possible because of the uncertainty in the system parameters at each time, rendering it impossible to predict future states perfectly. \textcolor{black}{It is worth mentioning that the consideration of a predictor associates a completely different stabilization, which prevents the use of the stabilization conditions for the delay-free case in \citet{serieye2023attractors}.}  
\\
Based on the developments presented in \cite{portilla2024} for the case of input delay and the robust stabilization results obtained in \citet{serieye2023attractors} for the delay-free case, this contribution aims to present a robust predictive control design for uncertain switched affine systems subject to a unitary input delay, ensuring exponential convergence to a robust limit cycle. Here, we assume that the uncertain parameters of the system can be represented in a polytopic form. We construct a prediction scheme by selecting nominal parameters from this representation. This scheme allows us to synthesize a feedback control law based on min-switching techniques, compensating for the unitary input delay and uncertainties of the system parameters. A key aspect of our approach is the proposal of a Lyapunov function that incorporates the system state and past predictions. We demonstrate that both trajectories converge to an invariant set, specifically a robust limit cycle that aligns with this Lyapunov function. The main result is validated through an example.
\\
The paper is organized as follows. The definition of uncertain discrete switched affine systems subject to a unitary input delay, a previous result for the delay-free case, and the control objectives are introduced in Section~\ref{sec:form_prob}. In Section~\ref{sec:predic_control}, the prediction scheme and the main result are presented. An example validating our main result is discussed in Section~\ref{sec:example}. Finally, some comments and remarks are provided in Section~\ref{sec:conclusions}.
\\
\textbf{Notation:} The set of all real $n\times  m$ matrices is denoted by $\mathbb R^{n\times m}$, and the set of symmetric matrices in $\mathbb R^{n\times n}$ is denoted by $\mathbb S^n$. Matrices $I_n$ and $0_{n,m}$ ($0=0_{n,n}$) denote the identity matrix $\mathbb R^{n\times n}$ and null matrix $\mathbb R^{n\times m}$, respectively. When no confusion is possible, the subscripts of the matrices that specify the dimensions are omitted from the equations. For matrix $M$ of $\mathbb S^n$, the notation  $M\succ 0$ indicates that $M$ is positive definite. For any matrix $A=A^{\!\top},B,C=C^{\!\top}$ of appropriate dimensions, matrix $\left[\begin{smallmatrix}A&B\\\star & C  \end{smallmatrix} \right]$ denotes the symmetric matrix $\left[\begin{smallmatrix}A&B\\ B^{\!\top}& C  \end{smallmatrix} \right]$. 
$\text{Co}(A^{i},B^{i})$ refers to the convex hull generated by matrices $A^{i}$ and $B^{i}$, $i=1,\ldots,m$. For a matrix $P\succ 0$ and vector $x\in\R^n$, we denote $\|x\|_P=\sqrt{x^{\!\top} P x}$ as the weighted norm. For matrix $M\succ 0$ and vector $y\in\R^n$, we denote the shifted ellipsoid $\mathcal E(M,y)=\left\{x\in\mathbb R^n,~ (x-y)^{\!\top} M (x-y)\< 1 \right\}$.

\section{Problem formulation}\label{sec:form_prob}
\subsection{Uncertain switched affine systems with input-delay}
Consider the discrete switched affine system given by  
\begin{equation}\label{eq:model_x}
\begin{array}{lcl}
	x_{k+1}&=&A_{\sigma_{k-1}}x+B_{\sigma_{k-1}},\\
 \sigma_{k}&=& u_{\sigma,k},
\end{array}
\end{equation}
where $x_k\in\mathbb R^n$ is the system state, $\sigma_k \in \mathbb K:=\{1,2,\ldots,K\}$ is the switching signal with \textcolor{black}{$K$ the number of modes}, and $u_{\sigma,k}$ is the switching control law that selects the active mode in $\mathbb K$ for any time instant $k\in\N$. The initial conditions of the system \eqref{eq:model_x} are given by  $(x_0,\sigma_0,\sigma_{-1})$ in $\R^n\times \mathbb K^2$. Here, matrices $A_j\in\mathbb R^{n\times n}$ and $B_j\in\mathbb R^{n\times 1}$ for each mode $j \in \mathbb K=\{1,2,\dots,K\}$ define the system dynamics, which are assumed to be unknown and/or time-varying, but admit a polytopic representation given by
\begin{equation}\label{def:polytope}
    [A_{j},B_{j}]\in\text{Co}([A^{\ell}_{j},B^{\ell}_{j}])_{\ell\in\mathbb{L}},
\end{equation}
where $\mathbb{L}$ is a bounded subset of $\N$ and $A^{\ell}_{j}$ and $B^{\ell}_{j}$ are known and constant for any $j\in\mathbb K$ and any $\ell\in\mathbb L$. It is remarkable that the switching control law represented by $\sigma_{k-1}=u_{\sigma,k-1}$ is interpreted as a switching control law subject to a unitary delay. This approach contrasts with the work presented in \citet{serieye2023attractors} or \cite{egidio2020global}, where the switching control law is not delayed. \\
This work focuses on designing a suitable set-valued map $u_{\sigma,k}$ that robustly stabilizes uncertain switched affine systems subject to a unitary input delay to a set of vectors, specifically, a robust limit cycle to be characterized. It is worth mentioning that the delayed but nominal case was addressed in \citet{portilla2024}, arriving at the same stabilization condition obtained in \citet{serieye2023attractors} for the free-delay case. Here, from a state prediction using nominal system parameters, we provide a robust stabilization result for the uncertain system \eqref{eq:model_x}. 

\subsection{Hybrid limit cycles}
The adapted notion of a hybrid limit cycle and a robust limit cycle for discrete-time switched affine systems was presented in \cite{serieye2023attractors}, which is recalled in this section. Moreover, in the sequel, to highlight the difference between the nominal and uncertain system, \textcolor{black}{we denote $[\bar A_{j},\bar B_{j}]\in\text{Co}([A^{\ell}_{j},B^{\ell}_{j}])_{\ell\in\mathbb{L}}, j\in\mathbb{K},$ for some given nominal matrices of system \eqref{eq:model_x}}. To do so, we first introduce the set of periodic functions from $\mathbb N$ to $\mathbb K$: 
$$
		\calc\!=\!\left\{
  \nu \!:\! \mathbb {N} \rightarrow \kset,\mbox{ s.t. } \exists N\!\in\N\!\setminus\!\!\{0\},\forall \ell\in \mathbb {N},\ 
  \nu(\ell\!+\!N)\!=\!\nu(\ell)
  \right\}.
  $$
For a given periodic function $\nu \!\in\! \mathcal C$, $N_\nu$ and $\mathbb D_\nu\!=\!\{1,2,\dots, N_\nu\}$ denote the minimum period of $\nu$, i.e. the smallest integer such that $\nu(\ell\!+\!N_\nu)= \nu(\ell)$, $\forall\ell \in \mathbb N$, and the domain of $\nu$, respectively. We also introduce the modulo notation $\lfloor i\rfloor_{\nu}=((i-1) \textrm{ mod }  N_\nu)+1$, for any $i\in\mathbb N$, $i\> 1$. 
\begin{definition}[Hybrid limit cycle]\citep{serieye2023attractors}
\label{def:limitcycle}
A hybrid limit cycle for system~\eqref{eq:model_x}, or limit cycle in short, is a closed and isolated hybrid trajectory $s:\mathbb N\to\mathbb K \times \mathbb R^n$, $k\mapsto s_k=(\sigma_{k},x_k)$, which is a periodic (but not constant) solution of~\eqref{eq:model_x}.  
\end{definition}  
The notion of limit cycle presented in \citet{serieye2023attractors} indicates the existence and uniqueness of a sequence of vectors $\{\rho_i\}_{i\in\mathbb D_\nu}$ associated with a given $\nu$ in $\mathcal C$, satisfying 
\begin{equation}\label{def:rho_i}
    \rho_{\lfloor  i+1\rfloor_{\nu}}=\bar A_{\nu( i)}\rho_{ i}+\bar B_{\nu(i)},\quad \forall i\in\mathbb D_\nu,
\end{equation}
for some given and constant matrices $\bar A_j$ and $\bar B_j$, $\forall j\in\mathbb K$. However, as outlined in \cite{serieye2023attractors}, this notion cannot be guaranteed for system \eqref{eq:model_x} because the existence of a sequence of vectors $\{\rho_i\}_{i\in\mathbb D_\nu}$ is not unique in the case of uncertain and/or time-varying matrices $A_j$ and $B_j$, $\forall j\in\mathbb K$. 
However, the problem of uncertain and/or time-varying matrices $A_j$ and $B_j$, $\forall j\in\mathbb K$, can be treated by considering a neighborhood of the hybrid limit cycle, characterized by a subset of vectors $\mathcal{L}_i\subset\R^n$, $\forall i\in\mathbb D_\nu$, associated with $\nu\in\mathcal{C}$. Next, we introduce the concept of a robust limit cycle for system \eqref{eq:model_x}. 
\begin{definition}[Robust limit cycle]\citep{serieye2023attractors}\label{def:robust_limit}
    System \eqref{eq:model_x} admits a robust limit cycle associated with $\nu\in\mathcal{C}$, if there exist possibly disjoint subsets $\mathcal{L}_i\subset\R^n$, for $i\in\mathbb D_\nu$, such that $A_{\nu(i)}^{\ell}\mathcal{L}_i+B_{\nu(i)}^{\ell}\!\subset\! \mathcal{L}_{\lfloor  i+1\rfloor_{\nu}},\  \forall i\!\in\!\mathbb D_\nu,\  \forall\ell\!\in\!\mathbb L.$ For the case of known parameters, subsets $\mathcal{L}_i$ reduce to $\mathcal{L}_i=\rho_i,~\forall i\in\mathbb D_\nu$.
\end{definition}
\subsection{Preliminary result on the delay-free case}
In \cite{serieye2023attractors}, for the nominal and delay-free cases, that is, $\mathrm{Card}(\mathbb L)\!=\!1$ and $\sigma_k\!=\!u_{\sigma,k}$ (not $\sigma_{k+1}=u_{\sigma,k}$), the authors provided sufficient stabilization conditions for a limit cycle. This result is recalled, which is essential for obtaining stabilization conditions for system \eqref{eq:model_x}.
\begin{theorem}\citep{serieye2023attractors}\label{th:matrix_W}
For a given cycle $\nu$  in $\mathcal C$, assume that there exist matrices $\{P_i\}_{i\in\mathbb D_\nu}$ in $\mathbb S^n$, such that 
	\begin{equation}\label{eq:LMI}
		P_i\succ 0 , \quad \bar A_{\nu(i)}^{\!\top} P_{\lfloor i+1\rfloor_\nu} \bar A_{\nu(i)}\prec P_i,~\forall i \in\mathbb D_\nu.
	\end{equation}
	Then, the following statements hold:
\begin{enumerate}[label=(\alph*)]
           \item Cycle $\nu$ generates a unique hybrid nominal limit cycle for system~\eqref{eq:model_x}. More precisely, there exists a unique sequence of vectors $\{\rho_i\}_{i\in\mathbb D_\nu}$ that is a solution to the set of equations \eqref{def:rho_i}.
			\item $\mathcal A_\nu := \bigcup_{i\in\mathbb D_\nu}  \{\rho_i\}$ is globally exponentially stable (or equivalently is an attractor) for nominal and delay-free system \eqref{eq:model_x} with $A_i=\bar A_i$, $B_i=\bar B_i$, for all $i \in\mathbb D_\nu$, and $\sigma_k=u_{\sigma,k}$, and with the switching control law 
                \begin{equation}\label{eq:control_nominal}
            		u_{\sigma}(x_k) \!=\! \left\{\! \nu\left( \theta_k \right),\theta_k\!\in\!\underset{i\in\mathbb D_\nu}	{\argmin}\left(x_k\!-\!\rho_i\right)^{\!\top}\! P_i\!\left(x_k\!-\!\rho_i\right) \!\right\}.
            	\end{equation}
	        \item If  $\rho_i\neq \rho_j$ for all $i\neq j$ in $\mathbb D_\nu$ holds, then the switching signal $\sigma=u_{\sigma}(x)$ resulting from the closed-loop system \eqref{eq:model_x},\eqref{eq:control_nominal} converges ultimately to a shifted version of cycle $\nu$.
\end{enumerate}
\end{theorem}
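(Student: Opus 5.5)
Part (a) reduces to a linear fixed-point problem. I would stack the unknowns $\rho=(\rho_1,\dots,\rho_{N_\nu})$ and observe that \eqref{def:rho_i} is equivalent to $\rho_1$ being a fixed point of the one-period affine map. That map is $\rho_1\mapsto \Phi\rho_1+\beta$, with
\[
\Phi=\bar A_{\nu(N_\nu)}\cdots\bar A_{\nu(1)},
\]
and $\beta$ the accumulated affine terms obtained by propagating $\bar B_{\nu(i)}$ through the remaining factors. Once $\rho_1$ is known, the other $\rho_i$ follow by forward iteration.

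Uniqueness and existence hold as soon as $I-\Phi$ is invertible, so it suffices to show $\Phi$ is Schur. Chaining \eqref{eq:LMI} around the cycle gives $\bar A_{\nu(i)}^{\!\top}P_{\lfloor i+1\rfloor_\nu}\bar A_{\nu(i)}\prec P_i$ for each step. Hence $\|\bar A_{\nu(i)}y\|_{P_{\lfloor i+1\rfloor_\nu}}\le \lambda\|y\|_{P_i}$ for some $\lambda<1$ uniform over the finitely many $i$. Composing these bounds yields $\|\Phi y\|_{P_1}\le\lambda^{N_\nu}\|y\|_{P_1}$, so all eigenvalues of $\Phi$ lie in the open unit disk.

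For (b) I would use the min-type Lyapunov function
\[
V(x)=\min_{i\in\mathbb D_\nu}\|x-\rho_i\|^2_{P_i},
\]
which is the natural candidate given the control law \eqref{eq:control_nominal}. Let $\theta_k$ be a minimizer, so that $\sigma_k=\nu(\theta_k)$. Using \eqref{def:rho_i}, the error satisfies
\[
x_{k+1}-\rho_{\lfloor\theta_k+1\rfloor_\nu}=\bar A_{\nu(\theta_k)}(x_k-\rho_{\theta_k}).
\]
Since $V$ is a minimum, evaluating it at the index $\lfloor\theta_k+1\rfloor_\nu$ gives an upper bound:
\[
V(x_{k+1})\le \|x_{k+1}-\rho_{\lfloor\theta_k+1\rfloor_\nu}\|^2_{P_{\lfloor\theta_k+1\rfloor_\nu}}\le\lambda^2 V(x_k).
\]
Together with the sandwich bounds $\underline p\,\mathrm{dist}(x,\mathcal A_\nu)^2\le V(x)\le\overline p\,\mathrm{dist}(x,\mathcal A_\nu)^2$, which come from the extreme eigenvalues of the $P_i$, this gives global exponential stability of $\mathcal A_\nu$. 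The key point is that the minimum in $V$ absorbs the mode chosen by the argmin, so no common Lyapunov matrix is needed.

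For (c) I expect the main obstacle: passing from convergence of the state to convergence of the discrete switching signal. My plan is as follows.
\begin{itemize}
\item Let $\delta=\min_{i\ne j}\|\rho_i-\rho_j\|>0$, which is positive by the distinctness assumption.
\item After some finite time, $V(x_k)$ is so small that $x_k$ lies within a ball around a single $\rho_{\theta_k}$ that is much smaller than $\delta$.
\item Continuity and strict separation then force the argmin to be unique and equal to that index $\theta_k$.
\item The contraction above shows the next unique minimizer is $\lfloor\theta_k+1\rfloor_\nu$, so from that time on $\theta_{k+1}=\lfloor\theta_k+1\rfloor_\nu$.
\end{itemize}
Consequently $\sigma_k=\nu(\theta_{k_0}+k-k_0)$, a shifted copy of $\nu$.

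The care needed is in making the ball radius explicit. It must be chosen small enough that the weighted distances, under different $P_i$, to the non-nearest $\rho_j$ exceed those to the nearest one. This follows from the eigenvalue bounds of the $P_i$ and $\delta$.
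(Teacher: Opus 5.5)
The paper does not prove this theorem. It is recalled from \citet{serieye2023attractors}, so there is no in-paper proof to compare against.

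Your proposal is correct. It is the standard argument, built from the same ingredients the paper uses elsewhere:
\begin{itemize}
\item For (a), you reduce to the monodromy matrix and show it is Schur stable (cf.\ Remark~\ref{ass:monodromy}), using a uniform contraction factor $\lambda^2=1-\mu$ as in the Corollary.
\item For (b), you bound the min-type function by its value at the successor index $\lfloor\theta_k+1\rfloor_\nu$. This is exactly the step in Lemma~\ref{lem:bound_pred_chi_1} and in the proof of Theorem~\ref{th:theo3}.
\end{itemize}

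For (c), the cleanest way to make your radius explicit is to use a weighted threshold instead of a Euclidean ball, since your contraction holds in the $P_i$-norms. Let $\underline p$ be the smallest eigenvalue among the $P_i$ and fix $0<\eta<\underline p\,\delta^2/4$.
\begin{itemize}
\item If $\|x-\rho_m\|^2_{P_m}<\eta$, then $\|x-\rho_m\|<\delta/2$.
\item Hence $\|x-\rho_i\|^2_{P_i}>\underline p\,\delta^2/4>\eta$ for every $i\neq m$, so $m$ is the unique minimizer.
\end{itemize}
Now pick $k_0$ with $V(x_{k_0})<\eta$. The inequality $\|x_{k+1}-\rho_{\lfloor\theta_k+1\rfloor_\nu}\|^2_{P_{\lfloor\theta_k+1\rfloor_\nu}}\le\lambda^2\|x_k-\rho_{\theta_k}\|^2_{P_{\theta_k}}$ then gives $\theta_{k+1}=\lfloor\theta_k+1\rfloor_\nu$ for all $k\ge k_0$ by induction. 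This route does not need the upper eigenvalue bound at all.
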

\begin{remark}
    LMI \eqref{eq:LMI} was not introduced for the first time in \cite{serieye2023attractors}: It first appeared in the context of periodic systems in \cite{bolzern1988periodic}. In the same context of switched affine systems, the same condition was presented in \cite{egidio2020global} to prove the stability of a limit cycle using a different control law. 
\end{remark}
\begin{remark}\label{ass:monodromy}
	It is worth mentioning that condition \eqref{eq:LMI} of Theorem~\ref{th:matrix_W} is equivalent to verifying the Schur stability of the monodromy matrix $\Phi_{\nu}$, defined by $\Phi_{\nu}=\Pi_{\iota=1}^{N_{\nu}} \bar A_{\nu(\iota)}$.
\end{remark}
\begin{corollary}
    If inequalities \eqref{eq:LMI} holds, then there exists a sufficiently small $\mu$ in $(0,1)$ such that 
    \begin{equation}
\bar A_{\nu(i)}^{\!\top}P_{\lfloor i+1\rfloor_{\nu}}\bar A_{\nu(i)}\prec\! (1-\mu)   P_{i}, \quad \forall i\in\mathbb D_\nu.\label{eq:LMI_mu}
\end{equation}
\end{corollary}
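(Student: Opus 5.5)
The argument rests on strictness of the inequalities in \eqref{eq:LMI} and on finiteness of $\mathbb D_\nu$. For each $i\in\mathbb D_\nu$ define the symmetric matrix
\begin{equation*}
M_i := P_i - \bar A_{\nu(i)}^{\!\top} P_{\lfloor i+1\rfloor_\nu} \bar A_{\nu(i)} .
\end{equation*}
By \eqref{eq:LMI}, $M_i\succ 0$, so its smallest eigenvalue $\lambda_{\min}(M_i)$ is strictly positive. Likewise $P_i\succ 0$, so $\lambda_{\max}(P_i)$ is positive and finite.

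The target inequality \eqref{eq:LMI_mu} is equivalent to $M_i - \mu P_i \succ 0$. For any nonzero $x\in\mathbb R^n$,
\begin{equation*}
x^{\!\top}(M_i-\mu P_i)x \;\geq\; \bigl(\lambda_{\min}(M_i)-\mu\,\lambda_{\max}(P_i)\bigr)\|x\|^2 .
\end{equation*}
This is strictly positive whenever $\mu<\lambda_{\min}(M_i)/\lambda_{\max}(P_i)$.

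Since $\mathbb D_\nu=\{1,\dots,N_\nu\}$ is finite, the quantity
\begin{equation*}
\mu^\ast := \min_{i\in\mathbb D_\nu} \frac{\lambda_{\min}(M_i)}{\lambda_{\max}(P_i)}
\end{equation*}
is a minimum of finitely many positive numbers, hence $\mu^\ast>0$. Any $\mu\in\bigl(0,\min(\mu^\ast,1)\bigr)$ then satisfies \eqref{eq:LMI_mu} for every $i\in\mathbb D_\nu$ simultaneously.

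We also need $\mu<1$, which the choice above enforces directly. In fact the bound holds automatically: $M_i\preceq P_i$ because $\bar A_{\nu(i)}^{\!\top} P_{\lfloor i+1\rfloor_\nu} \bar A_{\nu(i)}\succeq 0$. This gives $\lambda_{\min}(M_i)\leq\lambda_{\min}(P_i)\leq\lambda_{\max}(P_i)$, so $\mu^\ast\leq 1$.

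The only point that needs care is uniformity of $\mu$ over $i$. Finiteness of $\mathbb D_\nu$ provides it, so there is no real obstacle.
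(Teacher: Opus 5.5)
Your proof is correct and uses the same idea as the paper, which simply appeals to the strictness of the finitely many inequalities in \eqref{eq:LMI}. Your explicit bound $\mu^\ast=\min_{i\in\mathbb D_\nu}\lambda_{\min}(M_i)/\lambda_{\max}(P_i)$ makes that one-line argument quantitative.
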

\begin{pf}
The proof is easily derived from the finite number of strict inequalities in \eqref{eq:LMI}.
\end{pf}
\subsection{Autonomous augmented delay-free model}\label{subsec:delay_free_model}
As presented in \citet{portilla2024}, for a given $\nu$ in $\mathcal C$, system \eqref{eq:model_x} can be rewritten as follows
\begin{equation}\label{eq:model_xitheta}
	x_{k+1}=A_{\nu(\theta_k)}x_k+B_{\nu(\theta_k)},\ 
 \theta_{k+1}=  {u_k \in\mathbb D_\nu},
\end{equation}
where $u_k\!\in\!\mathbb D_\nu$ is the new control input, such that $u_{\sigma,k}\!=\!\nu(\theta_{k+1})$ belongs to $\mathbb K$. 
\textcolor{black}{Regarding the augmented state $\begin{bmatrix}x_k^\top&x_{k-1}^\top\end{bmatrix}^\top$, the system dynamics can be rewritten as \begin{equation}\label{eq:model_xitheta_output}
\begin{split}
\begin{array}{ccl}
	\begin{bmatrix}
	    x_{k+1}\\x_{k}
	\end{bmatrix}\!\!\!=\!\!\begin{bmatrix}
	    A_{\nu(u(y_k))}& \!\!0\\
     I&\!\! 0\\
	\end{bmatrix}\!\!\begin{bmatrix}
	    x_k\\x_{k-1}
	\end{bmatrix}\!\!+\!\!\begin{bmatrix}
	    B_{\nu(u(y_k))}\\
     0
	\end{bmatrix}\!,y_k\!=\!\begin{bmatrix}
0&I
	\end{bmatrix}\!\!\begin{bmatrix}
	    x_k\\x_{k-1}
	\end{bmatrix}
\end{array}\!,
\end{split}
\end{equation}
where the output vector $y_k$ indicates the available information for the control design, leading to the problem of output feedback control design of $u(y_k)$. Consequently, the method proposed in \cite{serieye2023attractors} is inapplicable, as it relies on a control Lyapunov function, which requires full knowledge of the state vector.}

\subsection{Control objectives}
To provide a robust stabilization criterion considering system uncertainties, we consider the nominal system \eqref{eq:model_xitheta} with known matrices $A_j=\bar A_j$, $B_j=\bar B_j$, $\forall j\in\mathbb K$, and hold the following assumption in the sequel.
\begin{assumption}\label{ass:nominal_system}
    For a given cycle $\nu\in\mathcal{C}$ and given matrices \textcolor{black}{$[\bar A_{j},\bar B_{j}]\in\text{Co}([A^{\ell}_{j},B^{\ell}_{j}])_{\ell\in\mathbb{L}}$}, there exist $\mu\in(0,1)$, $\{\rho_i\}_{i\in\mathbb D_\nu}$ in $(\mathbb R^n)^{N_\nu}$ and $\{P_i\}_{i\in\mathbb D_\nu}\in\mathbb S^n$, satisfying \eqref{def:rho_i} and \eqref{eq:LMI_mu}.
\end{assumption}
Under Assumption~\ref{ass:nominal_system}, our objectives are as follows: 
\begin{itemize}
    \item Design a predictive scheme for compensating a unitary input delay, using the nominal matrices $[\bar A_{j},\bar B_{j}]
    , j\in\mathbb{K}$. 
    \item Under Assumption~\ref{ass:nominal_system}, design a robust predictive state-feedback min-switching control law that stabilizes the uncertain system \eqref{eq:model_xitheta}.
    \item Under Assumption~\ref{ass:nominal_system}, estimate a set of attractors, expressed as level sets of a Lyapunov function determined by the nominal limit cycle $\{\rho_i\}_{i\in\mathbb D_\nu}$ and matrices $\{P_i\}_{i\in\mathbb D_\nu}$ in $\mathbb S^n$, which are associated with a given $\nu \in \mathcal C$. The level sets must contain the subsets $\mathcal{L}_i\subset\R^n$, $i\in\mathbb D_\nu$, introduced in Definition~\ref{def:robust_limit}, while the memory variables $\theta$ and $\vartheta$ remain in $\mathbb D_\nu$.  
 
\end{itemize}

\section{Robust predictive control design}\label{sec:predic_control}
\subsection{Definition and properties of the predictor}
 Inspired by our previous study on predictive control for delayed time-invariant systems \citep{portilla2024}, we introduce the following prediction scheme for system \eqref{eq:model_xitheta}:
    \begin{align}
        \chi_0=x_k,\quad 
    \chi_1=\bar A_{\nu(\theta_k)}\chi_0 + \bar B_{\nu(\theta_k)},\label{def:predictor_nominal}
    \end{align}
which aims to predict the future state $x_{k+1}$ through $\chi_1$. Note that the prediction scheme \eqref{def:predictor_nominal} can be implemented because the nominal matrices $\{\bar A_{j}, \bar B_{j}\}_{ j\in\mathbb{K}}$ are assumed to be known. \textcolor{black}{Using prediction \eqref{def:predictor_nominal}, we propose the following control input to compensate for the input delay:
\begin{equation}\label{eq:control}
    u_k\in\underset{i\in\mathbb D_\nu}{\argmin}\Big( \left\Vert \chi_1\!-\!\rho_{i}\right\Vert_{P_{i}}^2 \Big).
\end{equation}}\\
\textcolor{black}{Note that $\chi_{1}$ is build upon the knowledge $x_k$ and $\theta_k$. Recalling that $\theta_k$ is the previous control action, we have $\theta_k\in\argmin_{i\in\mathbb D_\nu}\big( \left\Vert \fkz_1\!-\!\rho_{i}\right\Vert_{P_{i}}^2 \big)$, where $\fkz_1$ refers to the prediction made at the previous step and defined by
\begin{align}
    \fkz_0=z_k,\quad 
    \fkz_1=\bar A_{\nu(\vartheta_k)}\fkz_0 + \bar B_{\nu(\vartheta_k)}.\label{def:predictor_nominal_past}
\end{align} 
Here, $\fkz_0$ and  $\vartheta_k$ stand for the memory variables of $x_k$ and $\theta_k$, respectively, obtained at the previous step. More precisely, we have $z_{k+1}=x_k$ and $\vartheta_{k+1}=\theta_k$.} 
\textcolor{black}{
Due to the uncertainties in the systems matrices, the prediction at the previous step $\fkz_1$ and the current state $x_k$ are possibly different. Therefore, we need to introduce the complete state vector of the closed-loop system
\begin{equation}
    \xi_k:=\begin{bmatrix}x_k^{\!\top} & z_k^{\!\top} & \theta_k & \vartheta_k \end{bmatrix}^{\!\top} \in\mathbb H^n:= \mathbb R^{2n} \!\times\! \mathbb D^2_\nu,
\end{equation}
Besides, we adopt the following notation in the sequel: $\xi^+=\xi_{k+1}$, $\xi=\xi_{k}$, $x=x_{k}$, $z=z_{k}$, $\theta=\theta_{k}$ and $\vartheta=\vartheta_{k}$. To emphasize the construction of the predictors upon the state vector $\xi$, we will also use notation $\chi_s(\xi)$ and $\fkz_s(\xi)$, for $s=0,1$.} 
\\
\textcolor{black}{As in \cite{portilla2024}, we can state the following lemma linking the current prediction and the prediction made at the previous step.}
\begin{lemma}\label{lem:pred}
   The predictors \eqref{def:predictor_nominal} and \eqref{def:predictor_nominal_past} verify $\fkz_s(\xi^+)=\chi_s(\xi),\  \forall s=0,1$, for any trajectories $(\xi^+,\xi)$.
\end{lemma}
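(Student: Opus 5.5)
The lemma follows directly from the definitions, by unfolding the update rule of the memory variables. I would first write the closed-loop update of the complete state $\xi$ explicitly. By construction, $z^+=z_{k+1}=x_k=x$ and $\vartheta^+=\vartheta_{k+1}=\theta_k=\theta$. Note that $x^+$ and $\theta^+$, which carry the uncertainty and the control law \eqref{eq:control}, play no role here. The predictors $\fkz_s(\xi^+)$ depend only on the memory components $(z^+,\vartheta^+)$ of $\xi^+$, and $\chi_s(\xi)$ depends only on $(x,\theta)$.

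For $s=0$: by \eqref{def:predictor_nominal_past}, $\fkz_0(\xi^+)=z^+$. The update gives $z^+=x$, and by \eqref{def:predictor_nominal}, $x=\chi_0(\xi)$.

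For $s=1$: by \eqref{def:predictor_nominal_past},
\begin{equation*}
\fkz_1(\xi^+)=\bar A_{\nu(\vartheta^+)}\fkz_0(\xi^+)+\bar B_{\nu(\vartheta^+)}.
\end{equation*}
Substituting $\vartheta^+=\theta$ and the $s=0$ identity $\fkz_0(\xi^+)=\chi_0(\xi)$ gives
\begin{equation*}
\fkz_1(\xi^+)=\bar A_{\nu(\theta)}\chi_0(\xi)+\bar B_{\nu(\theta)},
\end{equation*}
which is exactly $\chi_1(\xi)$ in \eqref{def:predictor_nominal}.

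The only point needing care is that both predictors use the same nominal matrices $\bar A_j,\bar B_j$. So the identity holds regardless of the true, possibly time-varying, $[A_j,B_j]$ in \eqref{def:polytope}, and for every trajectory pair $(\xi^+,\xi)$, whatever element of the argmin in \eqref{eq:control} is selected. The identity compares the two predictors with each other, not with the true state $x^+$, so the uncertainty never enters. I expect no real obstacle; the argument is the same as in the nominal case of \citet{portilla2024}.
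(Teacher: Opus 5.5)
Your proof is correct and matches the paper's argument: $\fkz_0(\xi^+)=z^+=x=\chi_0(\xi)$, then substitute this into the definition of $\fkz_1$. You also state explicitly that $\vartheta^+=\theta$, a step the paper uses without stating it when it writes $\bar A_{\nu(\theta)}$ in $\fkz_1(\xi^+)$.
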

\begin{pf}
    Consider any trajectory $(\xi^+,\xi)$. For $s=0$, the dynamic of $\xi$ guarantees $\fkz_0(\xi^+)=z^+=x=\chi_0(\xi)$. Consequently, for $s=1$, predictor \eqref{def:predictor_nominal_past} yields 
    \begin{equation*}
        \fkz_1(\xi^+)=\bar A_{\nu(\theta)}\fkz_0(\xi^+) + \bar B_{\nu(\theta)}=\bar A_{\nu(\theta)}\chi_0(\xi) + \bar B_{\nu(\theta)}=\chi_1(\xi).
    \end{equation*}
\end{pf}
\textcolor{black}{The main novelty with respect to \cite{portilla2024} comes from the fact that the uncertainties bring mismatch between the previous prediction and the current state. Therefore, we need to state the essential lemma, which quantifies this mismatch.}
\begin{lemma}\label{lemma:pred_error}
    Under Assumption~\ref{ass:nominal_system}, the identity 
    \begin{equation}\label{eq:chi_0plus}
    \chi_0(\xi^+)-\chi_1(\xi) =\Delta A_{\nu(\theta)}(\chi_0\!-\!\rho_{\theta})   + \delta_{\theta}
    \end{equation}
holds for any trajectory $(\xi^+,\xi)$, where $\Delta A_{\nu(\theta)}=A_{\nu(\theta)}-\bar A_{\nu(\theta)}$ denote the uncertainty in the state matrix and where \begin{equation}
    \delta_{\theta}=A_{\nu(\theta)}\rho_{\theta}+ B_{\nu(\theta)} -\rho_{\lfloor \theta+1\rfloor_{\nu}}, \quad \theta\in\mathbb D_\nu \label{def:delta}
\end{equation} is the deviation of the nominal limit cycle due to the uncertainties in the system matrices.
\end{lemma}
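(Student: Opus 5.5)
This is a direct algebraic identity, so the plan is to expand both sides and match terms. No inequality is needed. Assumption~\ref{ass:nominal_system} is used only to guarantee that the nominal cycle $\{\rho_i\}_{i\in\mathbb D_\nu}$ exists and satisfies \eqref{def:rho_i}.

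First, we write out the left-hand side. By construction of the closed-loop state and of \eqref{def:predictor_nominal}, $\chi_0(\xi^+)=x^+$. The uncertain dynamics \eqref{eq:model_xitheta} give
\begin{equation*}
x^+ = A_{\nu(\theta)}x+B_{\nu(\theta)} .
\end{equation*}
Since $\chi_0(\xi)=x$, definition \eqref{def:predictor_nominal} gives
\begin{equation*}
\chi_1(\xi)=\bar A_{\nu(\theta)}x+\bar B_{\nu(\theta)} .
\end{equation*}
Subtracting, we obtain
\begin{equation*}
\chi_0(\xi^+)-\chi_1(\xi)=\Delta A_{\nu(\theta)}\,x+(B_{\nu(\theta)}-\bar B_{\nu(\theta)}).
\end{equation*}

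Next, we insert $\rho_\theta$. We write $x=(\chi_0-\rho_\theta)+\rho_\theta$, which turns the expression into
\begin{equation*}
\Delta A_{\nu(\theta)}(\chi_0-\rho_\theta)+\big[A_{\nu(\theta)}\rho_\theta+B_{\nu(\theta)}\big]-\big[\bar A_{\nu(\theta)}\rho_\theta+\bar B_{\nu(\theta)}\big].
\end{equation*}
By the nominal cycle relation \eqref{def:rho_i} with $i=\theta\in\mathbb D_\nu$, the last bracket equals $\rho_{\lfloor\theta+1\rfloor_\nu}$. What remains after the first term is exactly $\delta_\theta$ as defined in \eqref{def:delta}, which proves \eqref{eq:chi_0plus}.

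The only point needing care is bookkeeping. One must check that the mode active between $k$ and $k+1$ is $\nu(\theta_k)$, as in \eqref{eq:model_xitheta}. One must also check that the same index $\theta$ is used in the prediction \eqref{def:predictor_nominal}, so that the two mode indices coincide. Finally, $A_{\nu(\theta)},B_{\nu(\theta)}$ denote the actual, possibly time-varying, matrices at step $k$, which does not affect the identity.
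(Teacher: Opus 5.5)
Your proposal is correct and follows essentially the same route as the paper. Both expand $x^+$ and $\chi_1(\xi)$ with the same mode $\nu(\theta)$, add and subtract $\Delta A_{\nu(\theta)}\rho_\theta$, and invoke the nominal cycle relation \eqref{def:rho_i} to identify the remainder as $\delta_\theta$.
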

\begin{pf}
Note that, for any trajectory $\xi\in\mathbb H^{n}$, the identity 
\begin{align*}
   \chi_0(\xi^+)-\chi_1(\xi)&= A_{\nu(\theta)}x+ B_{\nu(\theta)}-\bar A_{\nu(\theta)}\chi_0(\xi) - \bar B_{\nu(\theta)}\nonumber
\end{align*}
 holds. Then, recalling the definition of the components of the limit cycle $\{\rho_i\}_{i\in\mathbb D}$ satisfying \eqref{def:rho_i} and taking into account that $x=\chi_0(\xi)$, we finally obtain
\begin{align*}
   \chi_0(\xi^+)-\chi_1(\xi)&= \Delta A_{\nu(\theta)}\chi_0(\xi)\!+\!B_{\nu(\theta)}\!-\!\bar B_{\nu(\theta)}\!\pm\! \Delta A_{\nu(\theta)}\rho_{\theta}\nonumber\\
  &= \Delta A_{\nu(\theta)}(\chi_0\!-\!\rho_{\theta})   + \delta_{\theta}.
\end{align*}
\end{pf}
\textcolor{black}{Note that when there is no uncertainty, we retrieve $\chi_0(\xi^+)-\chi_1(\xi)=0$ as assumed in \cite{portilla2024}.} Next, an essential property of the predictor \eqref{def:predictor_nominal} is presented, exploiting condition \eqref{eq:LMI_mu} of Theorem~\ref{th:matrix_W}.
\begin{lemma}\label{lem:bound_pred_chi_1}
    Under Assumption~\ref{ass:nominal_system}, it holds
    \begin{equation}\label{eq:chi_1plus}
    \Vert \chi_1(\xi)-\rho_{\lfloor  \theta \!+\!1\rfloor_{\nu}}\! \Vert _{P_{\lfloor  \theta +\!1\rfloor_{\nu}}}^2 \!\!\< \!(1-\mu)  \Vert \chi_0(\xi)-\!\rho_{\theta} \Vert _{P_{\theta }}^2,\ \forall \xi\!\in\!\mathbb H^{n}\!.
    \end{equation}
\end{lemma}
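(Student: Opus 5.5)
The plan is to write $\chi_1(\xi)-\rho_{\lfloor\theta+1\rfloor_\nu}$ as a purely linear function of $\chi_0(\xi)-\rho_\theta$, and then apply the strict inequality \eqref{eq:LMI_mu} of Assumption~\ref{ass:nominal_system}. The key point is that the predictor \eqref{def:predictor_nominal} uses only the nominal matrices $\bar A_{\nu(\theta)},\bar B_{\nu(\theta)}$. The sequence $\{\rho_i\}_{i\in\mathbb D_\nu}$ is generated by the same nominal matrices through \eqref{def:rho_i}. Hence no uncertainty enters this step, in contrast with Lemma~\ref{lemma:pred_error}.

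First, fix an arbitrary $\xi\in\mathbb H^n$. Since $\theta\in\mathbb D_\nu$, apply \eqref{def:rho_i} with $i=\theta$, which gives $\rho_{\lfloor\theta+1\rfloor_\nu}=\bar A_{\nu(\theta)}\rho_\theta+\bar B_{\nu(\theta)}$. Subtracting this from \eqref{def:predictor_nominal}, the affine terms $\bar B_{\nu(\theta)}$ cancel, and we get
\begin{equation*}
\chi_1(\xi)-\rho_{\lfloor\theta+1\rfloor_\nu}=\bar A_{\nu(\theta)}\big(\chi_0(\xi)-\rho_\theta\big).
\end{equation*}

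Second, write $e=\chi_0(\xi)-\rho_\theta$. The left-hand side of \eqref{eq:chi_1plus} equals $e^{\!\top}\bar A_{\nu(\theta)}^{\!\top}P_{\lfloor\theta+1\rfloor_\nu}\bar A_{\nu(\theta)}e$. By \eqref{eq:LMI_mu} with $i=\theta$, the matrix $(1-\mu)P_\theta-\bar A_{\nu(\theta)}^{\!\top}P_{\lfloor\theta+1\rfloor_\nu}\bar A_{\nu(\theta)}$ is positive definite. Its quadratic form in $e$ is therefore nonnegative, and zero only when $e=0$. This gives the non-strict inequality \eqref{eq:chi_1plus} for every $e$, and hence for every $\xi\in\mathbb H^n$.

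There is no genuine obstacle. The only point needing care is the index bookkeeping: \eqref{def:rho_i} and \eqref{eq:LMI_mu} must be applied at $i=\theta$ with the modular successor $\lfloor\theta+1\rfloor_\nu$. We should also note that the bound holds with ``$\<$'' rather than strict inequality, because $e$ may vanish.
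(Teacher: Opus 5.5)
Your proof is correct and matches the paper's argument: you use \eqref{def:rho_i} at $i=\theta$ to cancel the affine terms, obtaining $\chi_1(\xi)-\rho_{\lfloor\theta+1\rfloor_\nu}=\bar A_{\nu(\theta)}(\chi_0(\xi)-\rho_\theta)$, and then apply \eqref{eq:LMI_mu}. Your remark that only the non-strict inequality holds, with equality when $\chi_0(\xi)=\rho_\theta$, is more careful than the paper's final display, which writes a strict inequality that fails at that point.
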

\begin{pf}
If Assumption~\ref{ass:nominal_system} holds, then there exist vectors $\{\rho_i\}_{i\in\mathbb D_\nu}$ and matrices $\{P_i\}_{i\in\mathbb D_\nu}$ in $\mathbb S^n$, satisfying \eqref{def:rho_i} and \eqref{eq:LMI_mu}, respectively. Then, the definition of predictor \eqref{def:predictor_nominal} and equation \eqref{def:rho_i} yields
\begin{align*}
    \chi_1(\xi)\!-\!\rho_{\lfloor\theta+1\rfloor_{\nu}}\!&=\!\bar A_{\nu(\theta)}\chi_0(\xi)+\bar B_{\nu(\theta)}-\rho_{\lfloor  \theta+1\rfloor_{\nu}}\nonumber\\
    &=\bar A_{\nu(\theta)}\chi_0(\xi)+\bar B_{\nu(\theta)}-\bar A_{\nu(\theta)}\rho_{\theta}-\bar B_{\nu(\theta)}\nonumber\\
    &=\bar A_{\nu(\theta)}(\chi_0(\xi)-\rho_{\theta}).
\end{align*}
Computing now the weighted norm of $\chi_1(\xi)\!-\!\rho_{\lfloor  \theta+1\rfloor_{\nu}}$ with matrix $P_{\lfloor  \theta+1\rfloor_{\nu}}$ and from \eqref{eq:LMI_mu}, it yields
\begin{align}
\Vert \chi_1(\xi)-\rho_{\lfloor  \theta+1\rfloor_{\nu}} \Vert _{P_{\lfloor  \theta+1\rfloor_{\nu}}}^2&\!\!=\Vert \bar A_{\nu(\theta)}(\chi_0(\xi)-\rho_{\theta} )\Vert _{P_{\lfloor  \theta+1\rfloor_{\nu}}}^2\nonumber\\
&\!\!=\Vert (\chi_0(\xi)-\rho_{\theta} )\Vert _{\bar A_{\nu(\theta)}^{\!\top} P_{\lfloor  \theta+1\rfloor_{\nu}}\bar A_{\nu(\theta)}}^2\!\!\nonumber\\
&\!\!<\Vert (\chi_0(\xi)-\rho_{\theta} )\Vert _{(1-\mu) P_{\theta}}^2\nonumber,\label{ineq:chi10}
\end{align}
which concludes the proof.
\end{pf}
\subsection{Main result}
We are now able to state the main result of this paper.
\begin{theorem}\label{th:theo3}
Consider given cycle $\nu$  in $\mathcal C$, and for any given matrices $[\bar A_{j},\bar B_{j}]\in\text{Co}([A^{\ell}_{j},B^{\ell}_{j}])_{\ell\in\mathbb{L}},$ $j \in \mathbb K$ such that Assumption~\ref{ass:nominal_system} holds, i.e., there exist $\mu\in(0,1)$ and $\{\rho_i,P_i\}_{i\in\mathbb D_\nu}$ satisfying \eqref{def:rho_i}, \eqref{eq:LMI_mu}. Assume that there exist,  for a given parameter $\gamma\in(0,1)$, two matrices  $0\preceq R\in\mathbb S^n$ and $0\prec Q\in\mathbb S^n$, such that
\begin{equation}\label{eq:condition}
    \begin{bsmallmatrix}
        R+Q & -Q\\
        \star & P_{i}+Q-2R 
    \end{bsmallmatrix}\!\succ\! 0,\
    \Psi_i(A_{\nu(i)}^\ell,B_{\nu(i)}^\ell)\succ 0,\ \forall i \in\mathbb D_\nu,~ \ell\in \mathbb L,
\end{equation}
where
\begin{multline*}
    \Psi_i(A_{\nu(i)}^\ell,B_{\nu(i)}^\ell)=\\\begin{bsmallmatrix}
        (1\!-\!\gamma)(R+Q)-(1-\mu)P_{i} & -(1\!-\!\gamma)Q & 0 & (A_{\nu(i)}^\ell \!-\! \bar A_{\nu(i)})^{\!\top} (Q\!+\!2R)\\
        \star & (1\!-\!\gamma)(P_{i}+Q-2R) & 0 & 0\\
        \star & \star & \gamma & \delta_{i}^{\ell\top}\!(Q+2R)\\
        \star & \star & \star & Q+2R
    \end{bsmallmatrix},
\end{multline*}
and
$$\delta_{i}^\ell=A_{\nu(i)}^\ell\rho_{i} +   B_{\nu(i)}^\ell -\rho_{\lfloor i+1\rfloor_{\nu}}.$$
Then, \textcolor{black}{$$\mathcal S_\nu := \Big\{\xi\in\mathbb H^{n},\mbox{ s.t. }V(\xi)\<1,~\theta\in
    \underset{i\in\mathbb D_\nu}{\argmin}\Big( \left\Vert \fkz_1\!-\!\rho_{i}\right\Vert_{P_{i}}^2\Big)\Big\},$$}
where
$$V(\xi)= \Vert \fkz_1\!-\!\rho_{\theta} \Vert_{P_{\theta}-2R}^2  + \Vert \chi_0\!-\!\fkz_1 \Vert_{Q}^2+\Vert \chi_0\!-\!\rho_{\theta} \Vert_{R}^2,$$
is robustly globally exponentially stable and \textcolor{black}{invariant} for system \eqref{eq:model_xitheta} with the predictive switching control law \eqref{eq:control}. 
\end{theorem}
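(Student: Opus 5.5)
The plan is to show that $V$ decreases geometrically toward the level $1$ along every closed-loop trajectory. Concretely, the target is the one-step inequality
\begin{equation*}
V(\xi^+)-1\;\<\;(1-\gamma)\big(V(\xi)-1\big),\qquad \forall \xi\in\mathbb H^n ,
\end{equation*}
for every admissible realisation of $[A_{\nu(\theta)},B_{\nu(\theta)}]$ in the polytope \eqref{def:polytope}, including time-varying ones.

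This inequality suffices for three reasons. Iterating it gives $\max(V(\xi_k)-1,0)\<(1-\gamma)^k\max(V(\xi_0)-1,0)$. In the variables $a=x-\rho_\theta$ and $b=\fkz_1-\rho_\theta$, $V$ is the quadratic form
\begin{equation*}
[a;b]^{\!\top}\begin{bsmallmatrix} R+Q & -Q\\ \star & P_\theta+Q-2R\end{bsmallmatrix}[a;b],
\end{equation*}
which is positive definite by the first condition in \eqref{eq:condition}, so it bounds the distance of $(x,\fkz_1)$ to the cycle. Finally, the $\argmin$ constraint on $\theta$ in $\mathcal S_\nu$ holds automatically for all $k\>1$: by construction $\theta^+=u_k$ minimises $\|\chi_1(\xi)-\rho_i\|_{P_i}^2$, and $\fkz_1(\xi^+)=\chi_1(\xi)$. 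Together these give robust global exponential stability and invariance of $\mathcal S_\nu$.

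\textbf{Step 1: bound $V(\xi^+)$.} By Lemma~\ref{lem:pred}, $\fkz_1(\xi^+)=\chi_1(\xi)$. Set $e=\chi_0(\xi^+)-\chi_1(\xi)$ and $c=\chi_1-\rho_{\theta^+}$. Then $\chi_0(\xi^+)-\rho_{\theta^+}=e+c$, and
\begin{equation*}
V(\xi^+)=\|c\|_{P_{\theta^+}}^2-2\|c\|_R^2+\|e\|_Q^2+\|e+c\|_R^2 .
\end{equation*}
The main obstacle is that the min-switching rule \eqref{eq:control} minimises the $P_i$-norm, whereas $V$ involves the $P_i-2R$ weight. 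The $R$-terms are designed to handle this. The elementary inequality $\|e+c\|_R^2\<2\|e\|_R^2+2\|c\|_R^2$ (valid since $R\succeq0$) cancels the $-2\|c\|_R^2$ term, leaving
\begin{equation*}
V(\xi^+)\<\|c\|_{P_{\theta^+}}^2+\|e\|_{Q+2R}^2 .
\end{equation*}
By the $\argmin$ property of $\theta^+$, $\|c\|_{P_{\theta^+}}^2\<\|\chi_1-\rho_{\lfloor\theta+1\rfloor_\nu}\|^2_{P_{\lfloor\theta+1\rfloor_\nu}}$. Lemma~\ref{lem:bound_pred_chi_1} bounds this by $(1-\mu)\|a\|_{P_\theta}^2$, and Lemma~\ref{lemma:pred_error} gives $e=\Delta A_{\nu(\theta)}a+\delta_\theta$. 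Hence
\begin{equation*}
V(\xi^+)\<(1-\mu)\|a\|^2_{P_\theta}+\|\Delta A_{\nu(\theta)}a+\delta_\theta\|^2_{Q+2R}.
\end{equation*}

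\textbf{Step 2: reduce to an LMI.} The target inequality now follows if, for every $a,b$,
\begin{multline*}
(1-\gamma)[a;b]^{\!\top}\begin{bsmallmatrix} R+Q & -Q\\ \star & P_\theta+Q-2R\end{bsmallmatrix}[a;b]+\gamma\\
-(1-\mu)\|a\|^2_{P_\theta}-\|\Delta A_{\nu(\theta)} a+\delta_\theta\|^2_{Q+2R}\>0 .
\end{multline*}
Write this as a quadratic form in $[a;b;1]$. Since $Q+2R\succ0$, a Schur complement on the last term shows that it is implied by $\Psi_\theta(A_{\nu(\theta)},B_{\nu(\theta)})\succ0$, where $\Psi_\theta$ is the matrix $\Psi_i$ with the vertex matrices replaced by the actual system matrices and $i=\theta$.

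\textbf{Step 3: pass from vertices to the polytope.} Both $\Delta A_{\nu(i)}=A_{\nu(i)}-\bar A_{\nu(i)}$ and $\delta_i=A_{\nu(i)}\rho_i+B_{\nu(i)}-\rho_{\lfloor i+1\rfloor_\nu}$ are affine in $(A_{\nu(i)},B_{\nu(i)})$, and they enter $\Psi_i$ only linearly, in the off-diagonal blocks. So $\Psi_i$ is affine in $[A_{\nu(i)},B_{\nu(i)}]$. Positivity at the vertices $\ell\in\mathbb L$, which is the second condition in \eqref{eq:condition}, then extends by convexity to the whole polytope. Since the matrices at step $k$ are only required to lie in the polytope, the bound also holds for time-varying parameters.

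Assembling Steps 1–3 yields the contraction inequality, and hence the claimed stability and invariance of $\mathcal S_\nu$.
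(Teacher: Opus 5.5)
Your proposal is correct and follows essentially the same route as the paper. You use the same bound $V(\xi^+)\le\|\chi_1-\rho_{u}\|_{P_u}^2+\|\chi_0(\xi^+)-\chi_1\|_{Q+2R}^2$ (via $\|e+c\|_R^2\le 2\|e\|_R^2+2\|c\|_R^2$), the min-switching comparison with $\lfloor\theta+1\rfloor_\nu$, Lemmas~\ref{lemma:pred_error} and~\ref{lem:bound_pred_chi_1}, a Schur complement yielding $\Psi_\theta\succ0$, and affine dependence on $(A_{\nu(\theta)},B_{\nu(\theta)})$ for the vertex-to-polytope step. Your target contraction $V(\xi^+)-1\le(1-\gamma)\bigl(V(\xi)-1\bigr)$ is exactly what the paper's S-procedure with multiplier $\gamma$ encodes and what it uses for invariance, so you simply state it more directly; you also correctly note that the $\operatorname{argmin}$ constraint on $\theta$ holds automatically after one step, which the paper only assumes.
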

\begin{pf}
For the sake of readability, we will first use the short-hand notations $\chi_s=\chi_s(\xi)$, $\fkz_s=\fkz_s(\xi)$ $\chi_s^+=\chi_s(\xi^+)$, and $\fkz_s^+=\fkz_s(\xi^+)$ for $s=0,1$. 

The proof consists of proving that the current state $x$ converges to a neighborhood of the previous prediction $\fkz_1$ determined by the nominal limit cycle $\{\rho_i\}_{i\in\mathbb D_\nu}$, thanks to the closed-loop system \eqref{eq:model_xitheta}-\eqref{eq:control}. This suggests the construction of a Lyapunov function that measures the distance of the previous prediction $\fkz_1$, the current state $\chi_0$ and the prediction error $\chi_0-\fkz_1$ to the nominal limit cycle $\{\rho_i\}_{i\in\mathbb D_\nu}$, verifying that this function decreases along the solutions of the closed-loop system \eqref{eq:model_xitheta}-\eqref{eq:control}. To do so, we \textcolor{black}{regard the trajectories $\xi\in\mathbb H^n$ satisfying $\theta\in
    \underset{i\in\mathbb D_\nu}{\argmin}\Big( \left\Vert \fkz_1\!-\!\rho_{i}\right\Vert_{P_{i}}^2\Big)$} and propose the candidate Lyapunov function given by
\begin{equation}\label{eq:LF}
V(\xi)\!=\Vert \fkz_1\!-\!\rho_{\theta} \Vert_{P_{\theta}-2R}^2  + \Vert \chi_0\!-\!\fkz_1 \Vert_{Q}^2+\Vert \chi_0\!-\!\rho_{\theta} \Vert_{R}^2,~\forall\xi\in\mathbb H^n,
\end{equation}
where vectors $\{\rho_i\}_{i\in\mathbb D_\nu}$ and matrices $\{P_i\}_{i\in\mathbb D_\nu}$ in $\mathbb S^n$ are solution to \eqref{def:rho_i} and \eqref{eq:LMI_mu} (Assumption~\ref{ass:nominal_system}), respectively. Moreover, it is clear that $V(\xi)>0$ due to \eqref{eq:condition}. 

The forward increment of $V(\xi)$ writes
\begin{align}
	\Delta V(\xi)&\!=\!\displaystyle \Vert \fkz_1^+\!-\!\rho_{\theta^+}\Vert_{P_{\theta^+}\!\!-2R}^2 \!+\! \Vert \chi_0^+\!-\!\fkz_1^+ \Vert_{Q}^2 \!+\!\Vert \chi_0^+\!-\!\rho_{\theta^+} \Vert_{R}^2 \nonumber\\
	&-\Vert \fkz_1\!-\!\rho_{\theta} \Vert_{P_{\theta}-2R}^2  - \Vert \chi_0\!-\!\fkz_1 \Vert_{Q}^2-\Vert \chi_0\!-\!\rho_{\theta} \Vert_{R}^2\nonumber.
\end{align}
Now, using Lemma~\ref{lem:pred} and recalling that $\theta^+=u$, it yields
\begin{align}
	\Delta V(\xi)&\!=\!\displaystyle \Vert \chi_1\!-\!\rho_{u}\Vert_{P_{u}-2R}^2 + \Vert \chi_0^+\!-\!\chi_1 \Vert_{Q}^2 +\Vert \chi_0^+\!-\!\rho_{u} \Vert_{R}^2 \nonumber\\
	&-\Vert \fkz_1\!-\!\rho_{\theta} \Vert_{P_{\theta}-2R}^2  - \Vert \chi_0\!-\!\fkz_1 \Vert_{Q}^2-\Vert \chi_0\!-\!\rho_{\theta} \Vert_{R}^2\nonumber.
\end{align}
Focusing on the third term,
\begin{align*}
	\Vert \chi_0^+\!-\!\rho_{u}\pm\chi_1 \Vert_{R}^2&=\Vert \chi_0^+\!-\!\chi_1 \Vert_{R}^2 + \Vert \chi_1\!-\!\rho_{u} \Vert_{R}^2\\
    & + 2(\chi_0^+\!-\!\chi_1)^{\!\top} R (\chi_1\!-\!\rho_{u}) ,
\end{align*}
which, using Young's inequality for the last term, admits the following upper bound
\begin{equation*}
	\Vert \chi_0^+\!-\!\rho_{u} \Vert_{R}^2\<\Vert \chi_0^+\!-\!\chi_1 \Vert_{2R}^2 + \Vert \chi_1\!-\!\rho_{u} \Vert_{2R}^2.
\end{equation*}
Thus, the forward increment of $V(\xi)$ is bounded as follows
\begin{align}
	\Delta V(\xi)\!&\<\!\displaystyle \Vert \chi_1\!-\!\rho_{u}\Vert_{P_{u}}^2 + \Vert \chi_0^+\!-\!\chi_1 \Vert_{Q+2R}^2\nonumber\\
	&-\Vert \fkz_1\!-\!\rho_{\theta} \Vert_{P_{\theta}-2R}^2  - \Vert \chi_0\!-\!\fkz_1 \Vert_{Q}^2-\Vert \chi_0\!-\!\rho_{\theta} \Vert_{R}^2\nonumber.
\end{align}
As $u$ is defined by the control law in \eqref{eq:control}, the following inequality $\Vert \chi_1\!-\!\rho_{u}\Vert_{P_{u}}^2 \< \Vert \chi_1\!-\!\rho_{\iota}\Vert_{P_{\iota}}^2$ holds for any $\iota$ in $\mathbb D_\nu$.  Therefore, selecting $\iota =\lfloor \theta+1\rfloor_{\nu}$, we obtain 
\begin{align}
	\Delta V(\xi)\!&\<\!\displaystyle \Vert \chi_1\!-\!\rho_{\lfloor \theta+1\rfloor_{\nu}}\Vert_{P_{\lfloor \theta+1\rfloor_{\nu}}}^2 + \Vert \chi_0^+\!-\!\chi_1 \Vert_{Q+2R}^2\nonumber\\
	&-\Vert \fkz_1\!-\!\rho_{\theta} \Vert_{P_{\theta}-2R}^2  - \Vert \chi_0\!-\!\fkz_1 \Vert_{Q}^2-\Vert \chi_0\!-\!\rho_{\theta} \Vert_{R}^2\nonumber.
\end{align}
Using Lemmas~\ref{lemma:pred_error} and \ref{lem:bound_pred_chi_1}, we derive an upper bound of $\Delta V(\xi)$ as follows
\begin{align}
	\Delta V(\xi)\!&\<\!\displaystyle \Vert \chi_0-\rho_{\theta} \Vert_{(1-\mu)P_{\theta}} ^2+ \Vert \Delta A_{\nu(\theta)}(\chi_0\!-\!\rho_{\theta})   + \delta_{\theta} \Vert_{Q+2R}^2\nonumber\\
	&-\Vert \fkz_1\!-\!\rho_{\theta} \Vert_{P_{\theta}-2R}^2  - \Vert \chi_0\!-\!\fkz_1 \Vert_{Q}^2-\Vert \chi_0\!-\!\rho_{\theta} \Vert_{R}^2\nonumber,
\end{align}
or equivalently,
\begin{equation*}
   \begin{split}
     \Delta V(\xi)\!&\<-\mathcal{\zeta}_{\theta}^{\!\top}\left(\begin{bsmallmatrix}
        R+Q-(1-\mu)P_{\theta} & -Q & 0\\
        \star & P_{\theta}+Q-2R & 0\\
        \star & \star & 0\end{bsmallmatrix}\right.\\
        &- \left.\begin{bsmallmatrix}
        \Delta A_{\nu(\theta)}^{\!\top}\\0 \\\delta_{\theta}\end{bsmallmatrix} (Q+2R)\begin{bsmallmatrix}
        \Delta A_{\nu(\theta)}^{\!\top}\\0 \\ \delta_{\theta}\end{bsmallmatrix}^{\!\top} \right)\mathcal{\zeta}_{\theta},
   \end{split} 
\end{equation*}
where $\mu\in(0,1)$ is a fixed parameter, $\delta_\theta$ as in \eqref{def:delta}
and
\begin{equation}
{\zeta}_{\theta}=\begin{bmatrix}
(\chi_0\!-\!\rho_{\theta})^{\!\top} & (\fkz_1\!-\!\rho_{\theta})^{\!\top} & 1\end{bmatrix}^{\!\top}.\label{def_zeta}
\end{equation}
Recall that the objective is to ensure the asymptotic stability of $\mathcal S_\nu$, i.e. $V(\xi)\> 1$.  Using the augmented vector $\zeta_\theta$, we note that 
$$V(\xi)\> 1 \quad \Leftrightarrow \quad \zeta_\theta^{\!\top}\begin{bsmallmatrix}
        R+Q & -Q & 0\\
        \star & P_{\theta}+Q-2R & 0\\
        \star & \star & -1
    \end{bsmallmatrix}\zeta_\theta\>0.$$
Therefore, condition $\Delta V(\xi) < 0$ for any $\xi\not\in\mathcal S_\nu$ is equivalent, using an S-procedure, to the existence of a positive scalar $\gamma\in(0,1)$ such that
\begin{equation*}
	\begin{split}
		&\begin{bsmallmatrix}
        R+Q-(1-\mu)P_{\theta} & -Q & 0\\
        \star & P_{\theta}+Q-2R & 0\\
        \star & \star & 0\end{bsmallmatrix}- \begin{bsmallmatrix}
        \Delta A_{\nu(\theta)}^{\!\top}\\0 \\\delta_{\theta}\end{bsmallmatrix} (Q+2R)\begin{bsmallmatrix}
        \Delta A_{\nu(\theta)}^{\!\top}\\0 \\ \delta_{\theta}\end{bsmallmatrix}^{\!\top} \\
		&-\gamma \begin{bsmallmatrix}
        R+Q & -Q & 0\\
        \star & P_{\theta}+Q-2R & 0\\
        \star & \star & -1
\end{bsmallmatrix}\succ 0.
	\end{split} 
\end{equation*}
With the help of the Schur complement, the previous matrix inequality is also equivalent to $\Psi_{\theta}(A_{\nu(\theta)},B_{\nu(\theta})$. 
The last step consists of recalling that matrices $[A_{j}\ B_{j}]$, $j\in\mathbb{K}$, admit the polytopic representation in \eqref{def:polytope}, and noting that matrix  $\Psi_{\theta}$ is affine and consequently, convex with respect to \eqref{eq:condition} is affine with respect to $(A_{\nu(\theta)},B_{\nu(\theta})$.\\ 
To complete the proof, it remains to show that $\mathcal S_\nu$ is invariant. If condition \eqref{eq:condition} is satisfied, we already showed that $\Delta V(\xi) + \mu ( V(\xi)-1)<0$, which implies that 
\begin{equation*}
    \begin{split}
        V(\xi^+)&=V(\xi)-\gamma ( V(\xi)-1)\<(1-\gamma)V(\xi)+\gamma.
    \end{split}
\end{equation*}
Since $\xi\in \mathcal S_\nu$ ($V(\xi)\leq 1$) and $\gamma\in(0,1)$, then $V(\xi^+)\<(1-\gamma)+\gamma=1$ holds true. Therefore, $\xi^+\in \mathcal S_\nu$.
\end{pf}
\begin{remark}
    The decision variables of condition \eqref{eq:condition} in Theorem~\ref{th:theo3} are matrices $R\in\mathbb S^n$ and $Q\in\mathbb S^n$, and parameter $\gamma\in(0,1)$, whereas $\{\rho_i,P_i\}_{i\in\mathbb D_{\nu}}$ and $\mu$ are fixed a priori by solving \eqref{def:rho_i} and \eqref{eq:LMI_mu}. Therefore, fixing $\gamma$ as a tuning parameter, condition \eqref{eq:condition} becomes an LMI.
\end{remark}
\begin{remark}
  Condition \eqref{eq:condition} depends only on the current mode, $\theta$, and not on the previous one $\vartheta$, which has been introduced only to provide a complete stability analysis. This relevant fact drastically reduces the complexity of the condition \eqref{eq:condition}, since one would have to consider all the possible combinations of couples $(\theta,\vartheta)$. This allows us to extend this setup to larger delays.
\end{remark}
\begin{remark}\label{remark:neighborhood}
    Following Definition~\ref{def:robust_limit}, the attractor $\mathcal{S}_{\nu}$ is defined with variables $(\chi_0,\fkz_1)\in\mathbb R^{2n}$, which makes it complicated to illustrate graphically. However, one may consider its projection onto the set, where  $\chi_0=\fkz_1(=x)$, which leads to the condition $V(\xi)\<1$ yielding $\Vert x\!-\!\rho_{\theta} \Vert_{P_{\theta}-R}^2\<1$. This allows us to estimate the robust limit cycle as $\mathcal{L}_i\subset\mathcal E(P_i-R,\rho_i)$, for all $i\in\mathbb D_\nu$.
\end{remark}
\begin{remark}
    \textcolor{black}{The feasibility of conditions \eqref{eq:condition} can be verified to the case $A_{\nu(i)}\approx \bar A_{\nu(i)}$ and $B_{\nu(i)}\approx \bar B_{\nu(i)}$, where a solution to the decision variables in Theorem~\ref{th:theo3} can be given by
    \begin{equation*}
        \gamma<\mu,\quad  R=0,\quad Q\succ \dfrac{1-\mu}{\mu-\gamma} P_i,  \quad \forall i\in\mathbb D_\nu.
    \end{equation*}}
\end{remark}

\section{Illustrative example}\label{sec:example}
Consider the discrete-time switched affine system \eqref{eq:model_xitheta} borrowed from \cite{egidio2019novel}, consisting of two unstable modes given by  
\begin{equation*}
    A_1(\delta_1)=e^{F_1 T}+\begin{bsmallmatrix}
            0 & \delta_1 & 0\\0 & 0 & 0\\0& 0 &0
        \end{bsmallmatrix},~B_1(\delta_1)=\int_0^{T} e^{F_1 s}g_1\textup{d}s  + \begin{bsmallmatrix}
            0\\4\delta_1\\0
        \end{bsmallmatrix},
\end{equation*}
\begin{equation*}
    A_2(\delta_2)=e^{F_2 T}+\begin{bsmallmatrix}
            0 & 0 & 0\\\delta_2 & 0 & 0\\0&0&0
        \end{bsmallmatrix},~B_2(\delta_2)=\int_0^{T} e^{F_2 s}g_2\textup{d}s  + \begin{bsmallmatrix}
            1.4\delta_2\\0\\0
        \end{bsmallmatrix},
\end{equation*}
where
\begin{equation*}
    F_1\!=\!\begin{bsmallmatrix}
            -3 & -6 & 3\\2 & 2 & -3\\1.6& 0 &-2
        \end{bsmallmatrix},~F_2\!=\!\begin{bsmallmatrix}
            1 & 3 & 3\\-0.2 & -3 & -3\\0&0&-2
        \end{bsmallmatrix},~g_1\!=\!\begin{bsmallmatrix}
            0.5\\0\\0
        \end{bsmallmatrix},~g_2\!=\!\begin{bsmallmatrix}
            0\\0\\0.5
        \end{bsmallmatrix}.
\end{equation*}
Here, we assume that parameters $\delta_1$ and $\delta_2$ are unknown and time-varying, but bounded by $|\delta_1|\<0.007$ and $|\delta_2|\<0.015$. Selecting the nominal matrices $\bar A_j=A_j(0)$ and $\bar B_j=B_j(0),~\forall j\in\mathbb{K}$, and cycle $\nu=\{1,2\}$, Assumption~\ref{ass:nominal_system} is verified, yielding 
\begin{equation*}  
\begin{split}
P_1&=\begin{bsmallmatrix}0.73&3.77&-3.38\\3.778&25.7&-19.8\\-3.38&-19.8&28.49\end{bsmallmatrix},\ P_2=\begin{bsmallmatrix}4.51&4.89&-2.9\\4.89&6.31&-3.58\\-2.9&-3.58&4.02\end{bsmallmatrix},\\
\rho_1&=\begin{bsmallmatrix}3.84&
-0.65&
0.36\end{bsmallmatrix}^{\!\top},\quad\rho_2=\begin{bsmallmatrix}1.12&0.014&1.1042\end{bsmallmatrix}^{\!\top}
\end{split}
\end{equation*}
Considering that Assumption~\ref{ass:nominal_system} holds, we proceed to evaluate the robust stabilization conditions provided by Theorem~\ref{th:theo3}. To do so, setting $\mu=0.25$ and $\gamma=0.125$ and considering the above-computed vectors $\{\rho_i\}_{i\in\mathbb D_\nu}$ and matrices $\{P_i\}_{i\in\mathbb D_\nu}$, conditions \eqref{eq:condition} are verified using the CVX library in MATLAB, yielding
\begin{equation*}    R=\begin{bsmallmatrix}0.003&0.003&-0.0051\\0.003&0.0053&-0.0013\\-0.0051&-0.0013&0.0781\end{bsmallmatrix},\ Q=\begin{bsmallmatrix}65.8&7.56&-1.33\\7.56&158.7&-122.6\\-1.33&-122.6&553.01\end{bsmallmatrix}.
\end{equation*}
For initial conditions $x_0=\begin{bsmallmatrix}-1&1&-1\end{bsmallmatrix}^{\!\top}$ and $\sigma_0=u_0,$ where $u_0$ is a random value such that $u_0\in\{1,2\}$, the system response of the closed-loop system \eqref{eq:model_xitheta}-\eqref{eq:control} is depicted in Figure~\ref{fig:predictor}.
\begin{figure}[!t]   
     \centering
      \includegraphics[width=0.35\textwidth]{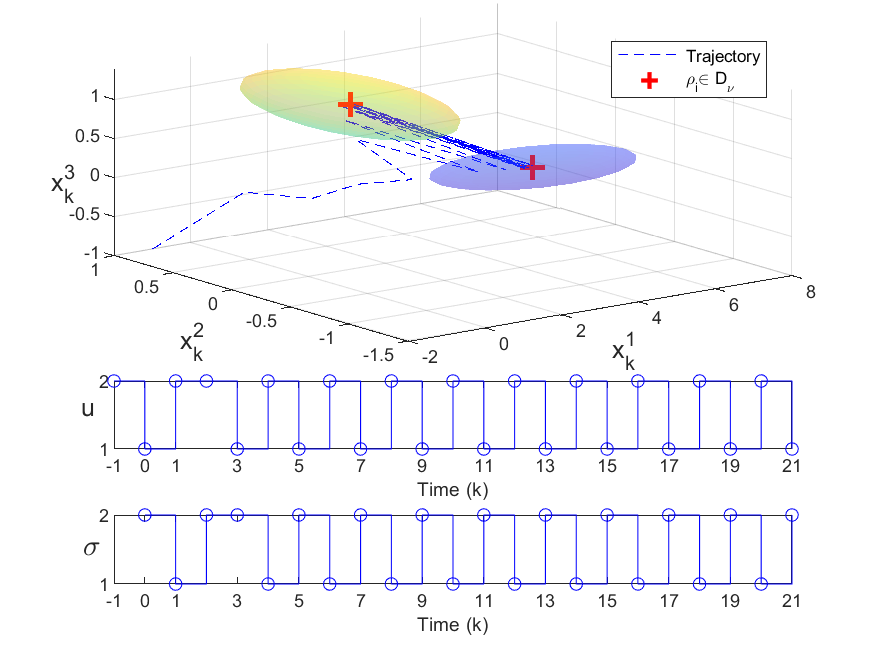}
      \vspace{-0.5cm}
    \caption{State trajectory of uncertain system \eqref{eq:model_xitheta} using the predictive switching control law \eqref{eq:control}.}
      \label{fig:predictor}
\end{figure}
This shows that the min-switching feedback control law \eqref{eq:control} compensates for the unitary input delay and system uncertainties, resulting in the robust exponential convergence of the system trajectory to a neighborhood surrounding the limit cycle $\{\rho_i\}_{i\in\mathbb D_\nu}$. Indeed, these neighborhoods can be estimated by the ellipsoids $\mathcal E(P_i-R,\rho_i)$ centered at the limit cycle $\{\rho_i\}_{i\in\mathbb{D}_{\nu}}$ (red crosses) as mentioned in Remark~\ref{remark:neighborhood}. In addition, note that the control law $u$ converges to the given cycle $\nu$, which is a direct consequence of  $\mathcal E(P_1-R,\rho_1)\bigcap \mathcal E(P_2-R,\rho_2)=\emptyset$, as reported in \citet{serieye2023attractors}. Finally, the evolution of the Lyapunov function \eqref{eq:LF} is depicted in Figure~\ref{fig:lyapunov_function}. This figure corroborates the invariance of attractor $\mathcal S_{\nu}$ $(V(\xi)\<1)$, which means that once the trajectory reaches $\mathcal S_{\nu}$, it never leaves, despite parameter uncertainties.
\begin{figure}[!t]   
     \centering
      \includegraphics[width=0.35\textwidth]{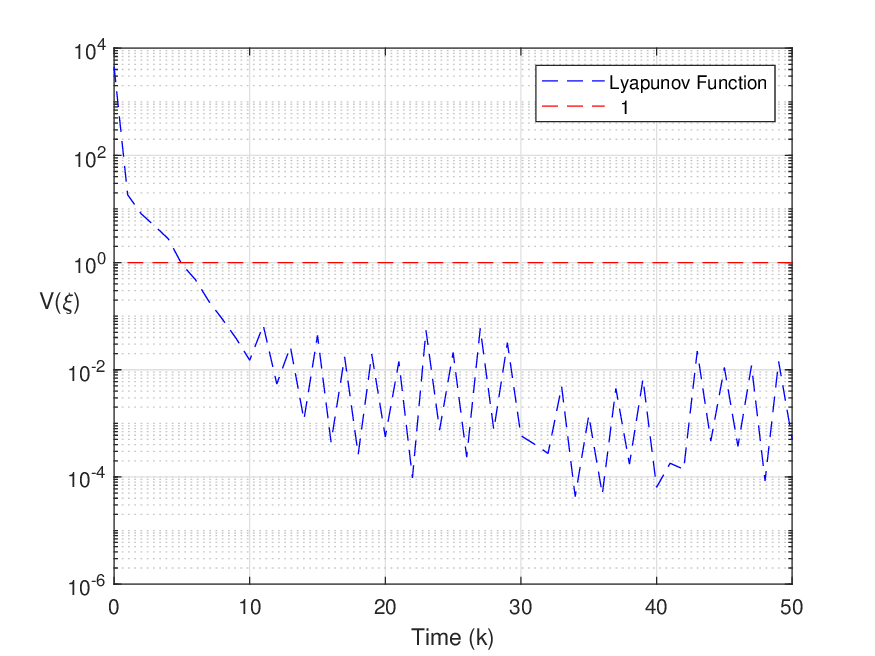}
      \vspace{-0.3cm}
    \caption{Lyapunov function of the uncertain system \eqref{eq:model_xitheta} with a log scale for discrete time.}
      \label{fig:lyapunov_function}
\end{figure}

\section{Conclusions}\label{sec:conclusions}
This paper presents robust stabilization conditions for uncertain switched systems with a unitary input delay. A prediction scheme based on nominal system parameters enables the design of the control law, compensating for both the input delay and parameter uncertainties. The conditions are derived using a Lyapunov function that characterizes the system attractor and a minimum-switching predictive control law,  ensuring exponential convergence to a robust limit cycle.
\\
Moreover, since the conditions depend only on the current mode and not on the previous one, this work lays the groundwork for extensions to a larger delay, for example, in power converter applications.

\bibliography{bib_SAS}           

\begin{thebibliography}{20}
\providecommand{\natexlab}[1]{#1}
\providecommand{\url}[1]{\texttt{#1}}
\providecommand{\urlprefix}{URL }
\expandafter\ifx\csname urlstyle\endcsname\relax
  \providecommand{\doi}[1]{doi:\discretionary{}{}{}#1}\else
  \providecommand{\doi}{doi:\discretionary{}{}{}\begingroup
  \urlstyle{rm}\Url}\fi

\bibitem[{Albea et~al.(2015)Albea, Garcia, and Zaccarian}]{albea2015hybrid}
Albea, C., Garcia, G., and Zaccarian, L. (2015).
\newblock Hybrid dynamic modeling and control of switched affine systems:
  application to {DC}-{DC} converters.
\newblock In \emph{54th IEEE Conference on Decision and Control}, 2264--2269.

\bibitem[{Beneux et~al.(2017)Beneux, Riedinger, Daafouz, and
  Grimaud}]{beneux2017robust}
Beneux, G., Riedinger, P., Daafouz, J., and Grimaud, L. (2017).
\newblock Robust stabilization of switched affine systems with unknown
  parameters and its application to {DC}/{DC} {F}lyback converters.
\newblock In \emph{2017 American Control Conference}, 4528--4533.

\bibitem[{Beneux et~al.(2019)Beneux, Riedinger, Daafouz, and
  Grimaud}]{beneux2019adaptive}
Beneux, G., Riedinger, P., Daafouz, J., and Grimaud, L. (2019).
\newblock Adaptive stabilization of switched affine systems with unknown
  equilibrium points: Application to power converters.
\newblock \emph{Automatica}, 99, 82--91.

\bibitem[{Bolzern and Colaneri(1988)}]{bolzern1988periodic}
Bolzern, P. and Colaneri, P. (1988).
\newblock The periodic {L}yapunov equation.
\newblock \emph{SIAM Journal on Matrix Analysis and Applications}, 9(4),
  499--512.

\bibitem[{Deaecto and Geromel(2017)}]{deaecto2016stability}
Deaecto, G. and Geromel, J. (2017).
\newblock Stability analysis and control design of discrete-time switched
  affine systems.
\newblock \emph{IEEE Trans. on Automatic Control}, 62(8), 4058--4065.

\bibitem[{Ebihara et~al.(2015)Ebihara, Arzelier, and Peaucelle}]{ebihara2015s}
Ebihara, Y., Arzelier, D., and Peaucelle, D. (2015).
\newblock \emph{S-variable approach to LMI-based robust control}.
\newblock Springer London.

\bibitem[{Egidio et~al.(2020)Egidio, Daiha, and Deaecto}]{egidio2020global}
Egidio, L., Daiha, H., and Deaecto, G. (2020).
\newblock Global asymptotic stability of limit cycle and
  $\mathcal{H}_2/\mathcal{H}_\infty$ performance of discrete-time switched
  affine systems.
\newblock \emph{Automatica}, 116, 108927.

\bibitem[{Egidio and Deaecto(2019)}]{egidio2019novel}
Egidio, L.N. and Deaecto, G.S. (2019).
\newblock Novel practical stability conditions for discrete-time switched
  affine systems.
\newblock \emph{IEEE Trans. on Automatic Control}, 64(11), 4705--4710.

\bibitem[{Mahmoud(2010)}]{mahmoud2010}
Mahmoud, M.S. (2010).
\newblock \emph{Switched Time-Delay Systems}.
\newblock Springer US.

\bibitem[{Mazenc et~al.(2017)Mazenc, Ahmed, and {\"O}zbay}]{mazenc2017state}
Mazenc, F., Ahmed, S., and {\"O}zbay, H. (2017).
\newblock State feedback stabilization of switched systems with delay:
  Trajectory based approach.
\newblock In \emph{2017 American Control Conference}, 4540--4543.

\bibitem[{Merch{\'a}n-Riveros et~al.(2024)Merch{\'a}n-Riveros, Albea, and
  Seuret}]{merchan2024data}
Merch{\'a}n-Riveros, M.C., Albea, C., and Seuret, A. (2024).
\newblock Data-driven control design for power converters approximated as
  switched affine systems and experimental validation.
\newblock \emph{IEEE Trans. on Circuits and Systems II: Express Briefs}.

\bibitem[{Portilla et~al.(2024)Portilla, Albea, and Seuret}]{portilla2024}
Portilla, G., Albea, C., and Seuret, A. (2024).
\newblock Predictive control design for discrete switched affine systems
  subject to a constant input delay.
\newblock \emph{IEEE Control Systems Letters}, 8, 2553--2558.

\bibitem[{Rubensson and Lennartson(2000)}]{rubensson2000stability}
Rubensson, M. and Lennartson, B. (2000).
\newblock Stability of limit cycles in hybrid systems using discrete-time
  {L}yapunov techniques.
\newblock In \emph{39th IEEE Conference on Decision and Control}, 1397--1402.

\bibitem[{Rubensson et~al.(1998)Rubensson, Lennartson, and
  Pettersson}]{rubensson1998convergence}
Rubensson, M., Lennartson, B., and Pettersson, S. (1998).
\newblock Convergence to limit cycles in hybrid systems-an example.
\newblock \emph{IFAC Proceedings Volumes}, 31(20), 683--688.

\bibitem[{Sanchez et~al.(2019)Sanchez, Polyakov, Fridman, and
  Hetel}]{sanchez2019switching}
Sanchez, T., Polyakov, A., Fridman, E., and Hetel, L. (2019).
\newblock A switching controller for a class of {MIMO} bilinear systems with
  time delay.
\newblock \emph{IEEE Trans. on Automatic Control}, 65(5), 2250--2256.

\bibitem[{Scherer(2001)}]{scherer2001theory}
Scherer, C. (2001).
\newblock Theory of robust control.
\newblock \emph{Delft University of Technology}, 1--160.

\bibitem[{Serieye et~al.(2023)Serieye, Albea, Seuret, and
  Jungers}]{serieye2023attractors}
Serieye, M., Albea, C., Seuret, A., and Jungers, M. (2023).
\newblock Attractors and limit cycles of discrete-time switching affine
  systems: nominal and uncertain cases.
\newblock \emph{Automatica}, 149, 110691.

\bibitem[{Sun and Ge(2011)}]{sun2011stability}
Sun, Z. and Ge, S.S. (2011).
\newblock \emph{Stability theory of switched dynamical systems}.
\newblock Springer London.

\bibitem[{Theunisse et~al.(2015)Theunisse, Chai, Sanfelice, and
  Heemels}]{theunisse2015robust}
Theunisse, T.A., Chai, J., Sanfelice, R.G., and Heemels, W.M.H. (2015).
\newblock Robust global stabilization of the {DC}-{DC} boost converter via
  hybrid control.
\newblock \emph{IEEE Trans. on Circuits and Systems I: Regular Papers}, 62(4),
  1052--1061.

\bibitem[{Vu and Morgansen(2010)}]{vu2010stability}
Vu, L. and Morgansen, K.A. (2010).
\newblock Stability of time-delay feedback switched linear systems.
\newblock \emph{IEEE Trans. on Automatic Control}, 55(10), 2385--2390.

\end{thebibliography}
\end{document}